\font\twlgot =eufm10 scaled \magstep1 \font\egtgot =eufm8
\font\sevgot =eufm7
\font\twlmsb =msbm10 scaled \magstep1 \font\egtmsb =msbm8
\font\sevmsb =msbm7
\def\pgot{\fam\gotfam\twlgot}
\def\got{\protect\pgot}
\def\Bbb{\protect\pBbb}
\def\pBbb{\relax\ifmmode\expandafter\Bb\else\typeout{You cann't use
Bbb in text mode}\fi}
\def\Bb #1{{\fam\msbfam\relax#1}}
\def\op#1{\mathop{{\it\fam0} #1}\limits}
\newcommand{\beq}{\begin{equation}}
\newcommand{\eeq}{\end{equation}}
\newcommand{\ben}{\begin{eqnarray}}
\newcommand{\een}{\end{eqnarray}}
\newcommand{\be}{\begin{eqnarray*}}
\newcommand{\ee}{\end{eqnarray*}}
\newcommand{\gG}{{\got G}}
\newcommand{\cA}{{\cal A}}
\newcommand{\cO}{{\cal O}}
\newcommand{\cL}{{\cal L}}
\newcommand{\cE}{{\cal E}}
\newcommand{\al}{\alpha}
\newcommand{\bt}{\beta}
\newcommand{\dl}{\delta}
\newcommand{\la}{\lambda}
\newcommand{\F}{\Phi}
\newcommand{\p}{\pi}
\newcommand{\Om}{\Omega}
\newcommand{\m}{\mu}
\newcommand{\n}{\nu}
\newcommand{\g}{\gamma}
\newcommand{\e}{\epsilon}
\newcommand{\si}{\sigma}
\newcommand{\w}{\wedge}
\newcommand{\wt}{\widetilde}
\newcommand{\wh}{\widehat}
\newcommand{\ol}{\overline}
\newcommand{\dr}{\partial}
\newcommand{\ot}{\otimes}
\newcounter{eqalph}
\newcounter{equationa}
\newcounter{example}
\newcounter{remark}
\newcounter{theorem}
\newcounter{proposition}
\newcounter{lemma}
\newcounter{corollary}
\newcounter{definition}
\def\theremark{\arabic{remark}}
\def\thedefinition{\arabic{definition}}
\newenvironment{proof}{\noindent {\bf Proof.}}{\hfill{\footnotesize\bf
QED} \bigskip }
\newenvironment{example}{\refstepcounter{remark} {\bf Example
\theremark.}}{}
\newenvironment{remark}{\refstepcounter{remark} {\bf Remark
\theremark.}}{}
\newenvironment{theorem}{\refstepcounter{definition} {\sc
Theorem \thedefinition}.}{$\Box$ }
\newenvironment{lemma}{\refstepcounter{definition} {\sc Lemma
\thedefinition}.}{ $\Box$ }
\newcommand{\mar}[1]{}
\begin{document}

\hbox{}

\begin{center}

{\large \bf Relativistic mechanics in a general setting}

\bigskip
\bigskip

{\sc G. Sardanashvily}
\bigskip

Department of Theoretical Physics, Moscow State University,
Moscow, Russia

\end{center}

\bigskip
\bigskip

\begin{small}

\noindent {\bf Abstract}. Relativistic mechanics on an arbitrary
manifold is formulated in the terms of jets of its one-dimensional
submanifolds. A generic relativistic Lagrangian is constructed.
Relativistic mechanics on a pseudo-Riemannian manifold is
particularly considered.
\end{small}


\bigskip
\bigskip

\section{Introduction}

Classical non-relativistic mechanics is adequately formulated as
Lagrangian and Hamiltonian theory on a fibre bundle $Q\to\Bbb R$
over the time axis $\Bbb R$
\cite{eche,book05,epr,leon,book98,sard98}.

If a configuration space $Q$ of a mechanical system has no
preferable fibration $Q\to\Bbb R$, we obtain a general formulation
of relativistic mechanics, including Special Relativity on the
Minkowski space $Q=\Bbb R^4$ \cite{book05,book98,sard98}. A
velocity space of relativistic mechanics is the first order jet
manifold $J^1_1Q$ of one-dimensional submanifolds of the
configuration space $Q$. The notion of jets of submanifolds
\cite{book,epr2,book09,kras} generalizes that of jets of sections
of fibre bundles which are utilized in field theory and
non-relativistic mechanics (Section 2). The jet bundle $J^1_1Q\to
Q$ is projective, and one can think of its fibres as being spaces
of the three-velocities of a relativistic system (Section 3). The
four-velocities of a relativistic system are represented by
elements of the tangent bundle $TQ$ of the configuration space
$Q$, while the cotangent bundle $T^*Q$, endowed with the canonical
symplectic form, plays a role of the phase space of relativistic
theory (Section 6).

We develop Lagrangian formalism on the jet bundle $J^1_1Q\to Q$
(Section 4). We show that, in the framework of this formalism,
Lagrangians possess a certain gauge symmetry (\ref{kk33}) and,
consequently, the corresponding Lagrange operators obey the rather
restrictive Noether identity (\ref{s60}). Solving this Noether
identity, we obtain the generic Lagrangian (\ref{kk1}) and the
equation of motion (\ref{k34}) of relativistic mechanics on a
manifold $Q$. In particular, if $Q$ is the Minkowski space, we are
in the case of Special Relativity (Example \ref{0313}).

Generalizing this example, we consider relativistic mechanics on
an arbitrary pseudo-Riemannian manifold. Its equation of motion is
the relativistic geodesic equation (\ref{cqg2}). Hamiltonian
relativistic mechanics on a pseudo-Riemannian manifold is
developed in Section 6. Its generic Hamiltonian takes the form
(\ref{kk101}).

\section{Jets of submanifolds}

Jets of sections of fibre bundles are particular jets of
submanifolds of a manifold \cite{book,epr2,book09,kras}.

Given an $m$-dimensional smooth real manifold $Z$, a $k$-order jet
of $n$-dimensional submanifolds of $Z$ at a point $z\in Z$ is
defined as an equivalence class $j^k_zS$ of $n$-dimensional
imbedded submanifolds of $Z$ through $z$ which are tangent to each
other at $z$ with order $k\geq 0$. Namely, two submanifolds
\be
i_S: S\to Z,\qquad  i_{S'}: S'\to Z
\ee
through a point $z\in Z$ belong to the same equivalence class
$j^k_zS$ if and only if the images of the $k$-tangent morphisms
\be
T^ki_S: T^kS\to T^kZ, \qquad  T^ki_{S'}: T^kS'\to T^kZ
\ee
coincide with each other. The set
\be
J^k_nZ=\op\bigcup_{z\in Z} j^k_zS
\ee
of $k$-order jets of submanifolds is a finite-dimensional real
smooth manifold, called the $k$-order jet manifold of
submanifolds. For the sake of convenience, we put $J^0_nZ =Z$.

If $k>0$, let $Y\to X$ be an $m$-dimensional fibre bundle over an
$n$-dimensional base $X$ and $J^kY$ the $k$-order jet manifold of
sections of $Y\to X$. Given an imbedding $\Phi:Y\to Z$, there is
the natural injection
\mar{kk83}\beq
J^k\Phi: J^kY\to J^k_nZ, \qquad j^k_xs \to [\Phi\circ
s]^k_{\Phi(s(x))}, \label{kk83}
\eeq
where $s$ are sections of $Y\to X$. This injection defines a chart
on $J^k_nZ$. These charts provide a manifold atlas of $J^k_nZ$.

Let us restrict our consideration to first order jets of
submanifolds. There is obvious one-to-one correspondence
\mar{z793}\beq
\la_{(1)}: j^1_zS \to V_{j^1_zS}\subset T_zZ  \label{z793}
\eeq
between the jets $j^1_zS$ at a point $z\in Z$ and the
$n$-dimensional vector subspaces of the tangent space $T_zZ$ of
$Z$ at $z$. It follows that $J^1_nZ$ is a fibre bundle
\mar{s3}\beq
\rho:J^1_nZ\to Z \label{s3}
\eeq
with the structure group $GL(n,m-n;\Bbb R)$ of linear
transformations of the vector space $\Bbb R^m$ which preserve its
subspace $\Bbb R^n$. The typical fibre of the fibre bundle
(\ref{s3}) is the Grassmann manifold
\be
\gG(n,m-n;\Bbb R)=GL(m;\Bbb R)/GL(n,m-n;\Bbb R).
\ee
This fibre bundle possesses the following coordinate atlas.

Let $\{(U;z^A)\}$ be a coordinate atlas of $Z$. Though $J^0_nZ=Z$,
let us provide $J^0_nZ$ with an atlas where every chart $(U;z^A)$
on a domain $U\subset Z$ is replaced with the
\be
{m\choose n}=\frac{m!}{n!(m-n)!}
\ee
charts on the same domain $U$ which correspond to different
partitions of the collection $(z^1\cdots z^A)$ in the collections
of $n$ and $m-n$ coordinates
\mar{5.8}\beq
(U; x^\la,y^i), \qquad \la=1,\ldots,n,  \qquad
i=1,\ldots,m-n.\label{5.8}
\eeq
The transition functions between the coordinate charts (\ref{5.8})
of $J^0_nZ$ associated with a coordinate chart $(U,z^A)$ of $Z$
are reduced to exchange between coordinates $x^\la$ and $y^i$.
Transition functions between arbitrary coordinate charts of the
manifold $J^0_nZ$  take the form
\mar{5.26} \beq
x'^\la = x'^\la (x^\m, y^k), \qquad y'^i = y'^i (x^\m, y^k).
\label{5.26}
\eeq

Given the coordinate atlas (\ref{5.8}) -- (\ref{5.26}) of a
manifold $J^0_nZ$, the first order jet manifold $J^1_nZ$  is
endowed with an atlas of adapted coordinates
\mar{5.31}\beq
(\rho^{-1}(U)=U\times\Bbb R^{(m-n)n}; x^\la,y^i,y^i_\la),
\label{5.31}
\eeq
possessing transition functions
\mar{5.36}\beq
y'^i_\la =\left(\frac{\dr y'^i}{\dr y^j}y^j_\al +\frac{\dr
y'^i}{\dr x^\al}\right) \left(\frac{\dr x^\al}{\dr y'^k}y'^k_\la
+\frac{\dr x^\al}{\dr x'^\la}\right). \label{5.36}
\eeq

\section{Relativistic mechanics}

As was mentioned above, a velocity space of relativistic mechanics
is the first order jet manifold $J^1_1Q$ of one-dimensional
submanifolds of a configuration space $Q$
\cite{book05,book98,sard98}.

Given an $m$-dimensional manifold $Q$ coordinated by $(q^\la)$,
let us consider the jet manifold $J^1_1Q$ of its one-dimensional
submanifolds. Let us provide $Q=J^0_1Q$ with the coordinates
(\ref{5.8}):
\mar{0303}\beq
(U;x^0=q^0, y^i=q^i)= (U;q^\la). \label{0303}
\eeq
Then the jet manifold
\be
\rho:J^1_1Q\to Q
\ee
is endowed with coordinates (\ref{5.31}):
\mar{0300}\beq
(\rho^{-1}(U);q^0,q^i,q^i_0) \label{0300}
\eeq
possessing transition functions (\ref{5.26}), (\ref{5.36}) which
read
\mar{s120,'}\ben
&&q'^0=q'^0(q^0,q^k), \qquad q'^0=q'^0(q^0,q^k), \label{s120}\\
&&q'^i_0= (\frac{\dr q'^i}{\dr q^j} q^j_0 + \frac{\dr q'^i}{\dr
q^0} ) (\frac{\dr q'^0}{\dr q^j} q^j_0 + \frac{\dr q'^0}{\dr q^0}
)^{-1}. \label{s120'}
\een
A glance at the transformation law (\ref{s120'}) shows that
$J^1_1Q\to Q$ is a fibre bundle in projective spaces.

\begin{example} \label{0310} \mar{0310}
Let $Q=M^4=\Bbb R^4$ be a Minkowski space whose Cartesian
coordinates $(q^\la)$, $\la=0,1,2,3,$ are subject to the Lorentz
transformations (\ref{s120}):
\mar{s122}\beq
q'^0= q^0{\rm ch}\al - q^1{\rm sh}\al, \quad q'^1= -q^0{\rm sh}\al
+ q^1{\rm ch}\al, \quad q'^{2,3} = q^{2,3}. \label{s122}
\eeq
Then $q'^i$ (\ref{s120'}) are exactly the Lorentz transformations
\be
q'^1_0=\frac{ q^1_0{\rm ch}\al -{\rm sh}\al}{ - q^1_0{\rm sh}\al+
{\rm ch}\al} \qquad q'^{2,3}_0=\frac{q^{2,3}_0}{ - q^1_0{\rm
sh}\al + {\rm ch}\al}
\ee
of three-velocities in relativistic mechanics
\cite{book98,sard98}.
\end{example}

In view of Example \ref{0310}, one can think of the velocity space
$J^1_1Q$ of relativistic mechanics as being a space of
three-velocities. For the sake of convenience, we agree to call
$J^1_1Q$ the three-velocity space and its coordinate
transformations (\ref{s120}) -- (\ref{s120'}) the relativistic
transformations, though a dimension of $Q$ need not equal $3+1$.

\section{Lagrangian relativistic mechanics}

Given the coordinate chart (\ref{0300}) of $J^1_1Q$, one can
regard $\rho^{-1}(U)\subset J^1_1Q$ as the first order jet
manifold $J^1U$ of sections of the fibre bundle
\mar{0301}\beq
\pi:U\ni (q^0,q^i)\to (q^0)\in \pi(U)\subset \Bbb R. \label{0301}
\eeq
Then three-velocities $(q^i_0)\in \rho^{-1}(U)$ of a relativistic
system on $U$ can be treated as absolute velocities of a local
non-relativistic system on the configuration space $U$
(\ref{0301}). However, this treatment is broken under the
relativistic transformations $q^i_0\to q'^i_0$ (\ref{s120}) since
they are not affine. One can develop first order Lagrangian
formalism with a Lagrangian
\be
L=\cL dq^0\in \cO^{0,1}(\rho^{-1}(U))
\ee
on a coordinate chart $\rho^{-1}(U)$, but this Lagrangian fails to
be globally defined on $J^1_1Q$ (see Remark \ref{kk50} below). The
graded differential algebra $\cO^*(\rho^{-1}(U))$ of exterior
forms on $\rho^{-1}(U)$ is generated by horizontal forms $dq^0$
and contact forms $dq^i-q^i_0 dq^0$. Coordinate transformations
(\ref{s120}) preserve the ideal of contact forms, but horizontal
forms are not transformed into horizontal forms, unless coordinate
transition functions $q^0$ (\ref{s120}) are independent of
coordinates $q'^i$.

In order to overcome this difficulty, let us consider a trivial
fibre bundle
\mar{str}\beq
Q_R=\Bbb R\times Q\to \Bbb R, \label{str}
\eeq
whose base $\Bbb R$ is endowed with a Cartesian coordinate $\tau$
\cite{book09}. This fibre bundle is provided with an atlas of
coordinate charts
\mar{s20'}\beq
(\Bbb R\times U; \tau,q^\la), \label{s20'}
\eeq
where $(U; q^0,q^i)$ are the coordinate charts (\ref{0303}) of the
manifold $J^0_1Q$. The coordinate charts (\ref{s20'}) possess
transition functions (\ref{s120}). Let $J^1Q_R$ be the first order
jet manifold of the fibre bundle (\ref{str}). Since the
trivialization (\ref{str}) is fixed, there is the canonical
isomorphism of $J^1Q_R$ to the vertical tangent bundle
\mar{s30}\beq
J^1Q_R= VQ_R= \Bbb R\times TQ \label{s30}
\eeq
of $Q_R\to \Bbb R$ \cite{book09,epr}.

Given the coordinate atlas (\ref{s20'}) of $Q_R$, the jet manifold
$J^1Q_R$ is endowed with the coordinate charts
\mar{s14}\beq
 ((\pi^1)^{-1}(\Bbb R\times U)=\Bbb R\times U\times\Bbb R^m;
\tau,q^\la,q^\la_\tau), \label{s14}
\eeq
possessing transition functions
\mar{s16'}\beq
q'^\la_\tau=\frac{\dr q'^\la}{\dr q^\m}q^\m_\tau. \label{s16'}
\eeq
Relative to the coordinates (\ref{s14}), the isomorphism
(\ref{s30}) takes the form
\mar{0305}\beq
(\tau,q^\m,q^\m_\tau) \to (\tau,q^\m,\dot q^\m=q^\m_\tau).
\label{0305}
\eeq

\begin{example} \label{0311} \mar{0311} Let $Q=M^4$ be a Minkowski
space in Example \ref{0310} whose Cartesian coordinates
$(q^0,q^i)$ are subject to the Lorentz transformations
(\ref{s122}). Then the corresponding transformations (\ref{s16'})
take the form
\be
q'^0_\tau= q^0_\tau{\rm ch}\al - q^1_\tau{\rm sh}\al, \quad
q'^1_\tau= -q^0_\tau{\rm sh}\al + q^1_\tau{\rm ch}\al, \quad
q'^{2,3}_\tau = q^{2,3}_\tau
\ee
of transformations of four-velocities in relativistic mechanics.
\end{example}

In view of Example \ref{0311}, we agree to call fibre elements of
$J^1Q_R\to Q_R$ the four-velocities though the dimension of $Q$
need not equal 4. Due to the canonical isomorphism $q^\la_\tau\to
\dot q^\la$ (\ref{s30}), by four-velocities also are meant the
elements of the tangent bundle $TQ$, which is called the space of
four-velocities.

Obviously, the non-zero jet (\ref{0305}) of sections of the fibre
bundle (\ref{str}) defines some jet of one-dimensional subbundles
of the manifold $\{\tau\}\times Q$ through a point $(q^0,q^i)\in
Q$, but this is not one-to-one correspondence.

Since non-zero elements of $J^1Q_R$ characterize jets of
one-dimensional submanifolds of $Q$, one hopes to describe the
dynamics of one-dimensional submanifolds of a manifold $Q$ as that
of sections of the fibre bundle (\ref{str}). For this purpose, let
us refine the relation between elements of the jet manifolds
$J^1_1Q$ and $J^1Q_R$.

Let us consider the manifold product $\Bbb R\times J^1_1Q$. It is
a fibre bundle over $Q_R$.  Given a coordinate atlas (\ref{s20'})
of $Q_R$, this product is endowed with the coordinate charts
\mar{s13}\beq
(U_R\times \rho^{-1}(U)=U_R\times U\times\Bbb R^{m-1};
\tau,q^0,q^i, q^i_0), \label{s13}
\eeq
possessing transition functions (\ref{s120}) -- (\ref{s120'}). Let
us assign to an element $(\tau,q^0,q^i, q^i_0)$ of the chart
(\ref{s13}) the elements $(\tau,q^0,q^i,q^0_\tau, q^i_\tau)$ of
the chart (\ref{s14}) whose coordinates obey the relations
\mar{s17}\beq
q^i_0 q^0_\tau = q^i_\tau. \label{s17}
\eeq
These elements make up a one-dimensional vector space. The
relations (\ref{s17}) are maintained under  coordinate
transformations (\ref{s120'}) and (\ref{s16'}) \cite{epr2,book09}.
Thus, one can associate:
\mar{s25}\beq
(\tau,q^0,q^i, q^i_0) \to \{(\tau,q^0,q^i,q^0_\tau, q^i_\tau) \, |
\, q^i_0 q^0_\tau = q^i_\tau\}, \label{s25}
\eeq
to each element of the manifold $\Bbb R\times J^1_1Q$ a
one-dimensional vector space in the jet manifold $J^1Q_R$. This is
a subspace of elements
\be
q^0_\tau (\dr_0 + q^i_0\dr_i)
\ee
of a fibre of the vertical tangent bundle (\ref{s30}) at a point
$(\tau,q^0,q^i)$. Conversely, given a non-zero element
(\ref{0305}) of $J^1Q_R$, there is a coordinate chart (\ref{s14})
such that this element defines a unique element of $\Bbb R\times
J^1_1Q$ by the relations
\mar{s31}\beq
q^i_0=\frac{q^i_\tau}{q^0_\tau}. \label{s31}
\eeq

Thus, we have shown the following. Let $(\tau,q^\la)$ further be
arbitrary coordinates on the product $Q_R$ (\ref{str}) and
$(\tau,q^\la,q^\la_\tau)$ the corresponding coordinates on the jet
manifold $J^1Q_R$.

\begin{theorem} \label{s50} \mar{s50}
(i) Any jet of submanifolds through a point $q\in Q$ defines some
(but not unique) jet of sections of the fibre bundle $Q_R$
(\ref{str}) through a point $\tau\times q$ for any $\tau\in \Bbb
R$ in accordance with the relations (\ref{s17}).

(ii)  Any non-zero element of $J^1Q_R$ defines a unique element of
the jet manifold $J^1_1Q$ by means of the relations (\ref{s31}).
However, non-zero elements of $J^1Q_R$ can correspond to different
jets of submanifolds.

(iii) Two elements $(\tau,q^\la,q^\la_\tau)$ and
$(\tau,q^\la,q'^\la_\tau)$ of $J^1Q_R$ correspond to the same jet
of submanifolds if $q'^\la_\tau=r q^\la_\tau$, $r\in\Bbb
R\setminus \{0\}$.
\end{theorem}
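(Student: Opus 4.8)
The plan is to trace all three assertions back to the single geometric fact in (\ref{z793}): a first-order jet $j^1_qS$ of a one-dimensional submanifold at $q$ is the same datum as the one-dimensional subspace $V_{j^1_qS}\subset T_qQ$ it spans. By the canonical isomorphism (\ref{s30}), an element of $J^1Q_R$ lying over $(\tau,q)$ is, up to the inert $\Bbb R$-factor, a tangent vector $q^\la_\tau\dr_\la\in T_qQ$. In the adapted coordinates (\ref{0300}) the subspace $V_{j^1_qS}$ is spanned by $\dr_0+q^i_0\dr_i$, so the whole statement is the elementary passage between a tangent vector and the line it spans; the only substantive ingredient is that this passage is compatible with the overlapping charts, i.e.\ with (\ref{s120'}) and (\ref{s16'}).

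For part (i), I would take an arbitrary $q^0_\tau\neq0$ and define $q^i_\tau$ by (\ref{s17}); then $q^\la_\tau\dr_\la=q^0_\tau(\dr_0+q^i_0\dr_i)$ spans $V_{j^1_qS}$, so this jet of sections projects onto $j^1_qS$. The residual freedom in $q^0_\tau$ is exactly the one-dimensional vector space of (\ref{s25}), which accounts for the words ``not unique''. The one genuine computation is that the recipe does not depend on the chart of $Q$: substituting (\ref{s17}) into the velocity law (\ref{s16'}) and comparing with the projective law (\ref{s120'}) should give $q'^i_\tau=q'^i_0q'^0_\tau$. I expect a short chain-rule manipulation in which the factor $\dr q'^0/\dr q^0+(\dr q'^0/\dr q^j)q^j_0$ sitting in the denominator of (\ref{s120'}) is precisely the factor by which (\ref{s16'}) rescales $q^0_\tau$, so the two cancel and leave the common prefactor $q^0_\tau$ on both sides of (\ref{s17}).

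For part (ii), a nonzero element of $J^1Q_R$ is a nonzero tangent vector, hence has $q^0_\tau\neq0$ after passing, if necessary, to another chart of $Q=J^0_1Q$ (one for a different partition (\ref{5.8}), i.e.\ a different choice of the distinguished coordinate $q^0$). In such a chart (\ref{s31}) defines the ratios $q^i_0$ and thereby a point of $J^1_1Q$. Uniqueness is then immediate: the resulting datum is the intrinsic line spanned by $q^\la_\tau\dr_\la$, and the transformation check of part (i) guarantees that any two admissible charts assign values of $q^i_0$ related by (\ref{s120'}), i.e.\ the same jet. The accompanying caveat---that different nonzero elements can give different submanifold-jets---is merely the remark that the map $J^1Q_R\setminus\{0\}\to J^1_1Q$ is nonconstant, its precise fibres being described by (iii).

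Part (iii) then drops out of (\ref{s31}): if $q'^\la_\tau=rq^\la_\tau$ with $r\neq0$, then $q'^i_\tau/q'^0_\tau=q^i_\tau/q^0_\tau$, so the two elements carry the same $q^i_0$ and hence span the same line in $T_qQ$, that is, the same jet of submanifolds. The whole argument therefore concentrates its one piece of real work in the invariance check of part (i); once (\ref{s17}) is shown to be preserved by (\ref{s120'})--(\ref{s16'}), parts (ii) and (iii) are formal.
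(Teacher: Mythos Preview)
Your proposal is correct and follows essentially the same route as the paper: the paper's proof is the discussion immediately preceding the theorem (introduced by ``Thus, we have shown the following''), which likewise builds the correspondence (\ref{s25}) from the relation (\ref{s17}), notes that (\ref{s17}) is preserved under the transition functions (\ref{s120'}) and (\ref{s16'}), and reads off the inverse assignment (\ref{s31}) after choosing a chart in which $q^0_\tau\neq0$. The only difference is one of explicitness: the paper cites \cite{epr2,book09} for the chart-invariance of (\ref{s17}), whereas you propose to carry out that chain-rule verification directly, and you frame the whole argument more geometrically via the line correspondence (\ref{z793}); both of these are welcome elaborations rather than a different method.
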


In the case of a Minkowski space $Q=M^4$ in Examples \ref{0310}
and \ref{0311}, the equalities (\ref{s17}) and (\ref{s31}) are the
familiar relations between three- and four-velocities.

Based on Theorem \ref{s50}, we can develop Lagrangian theory of
one-dimensional submanifolds of a manifold $Q$ as that of sections
of the fibre bundle $Q_R$ (\ref{str}). Let
\mar{s40}\beq
L=\cL(\tau,q^\la, q^\la_\tau) d\tau, \label{s40}
\eeq
be a first order Lagrangian on the jet manifold $J^1Q_R$. The
corresponding Lagrange operator reads
\mar{s41}\beq
\dl L= \cE_\la dq^\la\w d\tau, \qquad \cE_\la= \dr_\la\cL - d_\tau
\dr_\la^\tau\cL. \label{s41}
\eeq
It yields the Lagrange equation
\mar{s90}\beq
\cE_\la= \dr_\la\cL - d_\tau \dr_\la^\tau\cL =0. \label{s90}
\eeq

In accordance with Theorem \ref{s50}, it seems reasonable to
require that, in order to describe jets of one-dimensional
submanifolds of $Q$, the Lagrangian $L$ (\ref{s40}) on $J^1Q_R$
possesses a gauge symmetry given by vector fields
$u=\chi(\tau)\dr_\tau$ on $Q_R$ or, equivalently, their vertical
part
\mar{kk33}\beq
u_V= - \chi q^\la_\tau\dr_\la, \label{kk33}
\eeq
which are generalized vector fields on $Q_R$ \cite{book09,epr}.
Then the variational derivatives of this Lagrangian obey the
Noether identity:
\mar{s60}\beq
q^\la_\tau\cE_\la=0. \label{s60}
\eeq
We call such a Lagrangian the relativistic Lagrangian.

In order to obtain a generic form of a relativistic Lagrangian
$L$, let us regard the Noether identity (\ref{s60}) as an equation
for $L$. It admits the following solution. Let
\be
\frac1{2N!}G_{\al_1\ldots\al_{2N}}(q^\nu)dq^{\al_1}\vee\cdots\vee
dq^{\al_{2N}}
\ee
be a symmetric tensor field on $Q$ such that the function
\mar{kk6}\beq
G=G_{\al_1\ldots\al_{2N}}(q^\nu)\dot q^{\al_1}\cdots\dot
q^{\al_{2N}} \label{kk6}
\eeq
is positive:
\mar{kk7}\beq
G>0, \label{kk7}
\eeq
everywhere on $TQ\setminus \wh 0(Q)$. Let $A=A_\m(q^\nu) dq^\m$ be
a one-form on $Q$. Given the pull-back of $G$ and $A$ onto
$J^1Q_R$ due to the canonical isomorphism (\ref{s30}), we define a
Lagrangian
\mar{kk1}\beq
L=(G^{1/2N} + q^\m_\tau A_\m)d\tau, \qquad
G=G_{\al_1\ldots\al_{2N}}q^{\al_1}_\tau\cdots q^{\al_{2N}}_\tau,
\label{kk1}
\eeq
on $J^1Q_R\setminus (\Bbb R\times \wh 0(Q))$ where $\wh 0$ is the
global zero section of $TQ\to Q$. The corresponding Lagrange
equation reads
\mar{kk2,3}\ben
&&\cE_\la = \frac{\dr_\la G}{2NG^{1-1/2N}} - d_\tau\left(\frac{\dr_\la^\tau
G}{2NG^{1-1/2N}}\right)+ F_{\la\m}q^\m_\tau= \label{kk2}\\
&& \qquad E_\bt[\dl^\bt_\la -
 q^\bt_\tau G_{\la\nu_2\ldots\nu_{2N}}q^{\nu_2}_\tau\cdots
q^{\nu_{2N}}_\tau G^{-1}]G^{1/2N-1}=0, \nonumber\\
&& E_\bt= \left(\frac{\dr_\bt
G_{\m\al_2\ldots\al_{2N}}}{2N}- \dr_\m
G_{\bt\al_2\ldots\al_{2N}}\right) q^\m_\tau q^{\al_2}_\tau\cdots
q^{\al_{2N}}_\tau - \label{kk3}\\
&& \qquad (2N-1)G_{\bt\m\al_3\ldots\al_{2N}}q^\m_{\tau\tau}q^{\al_3}_\tau\cdots
q^{\al_{2N}}_\tau  + G^{1-1/2N}F_{\bt\m}q^\m_\tau,\nonumber\\
&& F_{\la\m}=\dr_\la A_\m-\dr_\m A_\la. \nonumber
\een
It is readily observed that the variational derivatives $\cE_\la$
(\ref{kk2}) satisfy the Noether identity (\ref{s60}). Moreover,
any relativistic Lagrangian obeying the Noether identity
(\ref{s60}) is of type (\ref{kk1}).

A glance at the Lagrange equation (\ref{kk2}) shows that it holds
if
\mar{kk5}\beq
E_\bt= \F G_{\bt\nu_2\ldots\nu_{2N}}q^{\nu_2}_\tau\cdots
q^{\nu_{2N}}_\tau G^{-1}, \label{kk5}
\eeq
where $\F$ is some function on $J^1Q_R$. In particular, we
consider the equation
\mar{kk4}\beq
E_\bt=0. \label{kk4}
\eeq

Because of the Noether identity (\ref{s60}), the system of
equations (\ref{kk2}) is underdetermined. To overcome this
difficulty, one can complete it with some additional equation.
Given the function $G$ (\ref{kk1}), let us choose the condition
\mar{kk8}\beq
G=1. \label{kk8}
\eeq
Owing to the property (\ref{kk7}), the function $G$ (\ref{kk1})
possesses a nowhere vanishing differential. Therefore, its level
surface $W_G$ defined by the condition (\ref{kk8}) is a
submanifold of $J^1Q_R$.

Our choice of the equation (\ref{kk4}) and the condition
(\ref{kk8}) is motivated by the following facts.

\begin{lemma} \label{kk12} \mar{kk12} Any solution of the
Lagrange equation (\ref{kk2}) living in the submanifold $W_G$ is a
solution of the equation (\ref{kk4}).
\end{lemma}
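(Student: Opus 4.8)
The plan is to exploit the factorised form of the variational derivatives displayed in (\ref{kk2}): writing $\cE_\la = E_\bt M^\bt_\la\,G^{1/2N-1}$ with $M^\bt_\la = \dl^\bt_\la - q^\bt_\tau\,G_{\la\nu_2\ldots\nu_{2N}}q^{\nu_2}_\tau\cdots q^{\nu_{2N}}_\tau\,G^{-1}$, and to show that along a solution confined to $W_G$ the scalar $\F:=E_\bt q^\bt_\tau$ vanishes. Since $G>0$ by (\ref{kk7}), the factor $G^{1/2N-1}$ is nowhere zero, so on the domain of $L$ the Lagrange equation (\ref{kk2}) is equivalent to $E_\bt M^\bt_\la=0$. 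Contracting the index $\bt$ gives $E_\bt M^\bt_\la = E_\la - \F\,G_{\la\nu_2\ldots\nu_{2N}}q^{\nu_2}_\tau\cdots q^{\nu_{2N}}_\tau\,G^{-1}$, so the Lagrange equation says exactly that $E_\la$ has the proportional form (\ref{kk5}) with $\F=E_\bt q^\bt_\tau$. Hence $E_\bt=0$, that is (\ref{kk4}), will follow once I show $\F=0$ on $W_G$.

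The substance of the proof is a direct evaluation of $\F=E_\bt q^\bt_\tau$ from (\ref{kk3}). In the first bracket, both $\dr_\bt G_{\m\al_2\ldots\al_{2N}}\,q^\bt_\tau q^\m_\tau q^{\al_2}_\tau\cdots q^{\al_{2N}}_\tau$ and $\dr_\m G_{\bt\al_2\ldots\al_{2N}}\,q^\bt_\tau q^\m_\tau q^{\al_2}_\tau\cdots q^{\al_{2N}}_\tau$ reduce, after relabelling and using the total symmetry of $G$, to the same fully contracted scalar $P=\dr_\si G_{\rho_1\ldots\rho_{2N}}q^\si_\tau q^{\rho_1}_\tau\cdots q^{\rho_{2N}}_\tau$, so this bracket contributes $\left(\frac{1}{2N}-1\right)P$. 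The acceleration term contributes $-(2N-1)R$ with $R=G_{\rho_1\ldots\rho_{2N}}q^{\rho_1}_{\tau\tau}q^{\rho_2}_\tau\cdots q^{\rho_{2N}}_\tau$, while the $F$-term $G^{1-1/2N}F_{\bt\m}q^\bt_\tau q^\m_\tau$ vanishes because $F_{\bt\m}$ is antisymmetric. Collecting these, $\F=-\frac{2N-1}{2N}P-(2N-1)R=-\frac{2N-1}{2N}(P+2NR)$. On the other hand, total $\tau$-differentiation of $G$ along a section gives $d_\tau G = q^\m_\tau\dr_\m G + q^\m_{\tau\tau}\dr_\m^\tau G = P+2NR$, the factor $2N$ arising because the $2N$ identical pieces from differentiating the velocity factors collapse by the symmetry of $G$. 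Therefore $\F = E_\bt q^\bt_\tau = -\frac{2N-1}{2N}\,d_\tau G$.

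To conclude: a solution living in $W_G$ satisfies $G=1$ identically in $\tau$, hence $d_\tau G=0$ along it, so $\F=0$ and consequently $E_\bt = \F\,G_{\bt\nu_2\ldots\nu_{2N}}q^{\nu_2}_\tau\cdots q^{\nu_{2N}}_\tau\,G^{-1}=0$, which is (\ref{kk4}). I expect the only delicate point to be the index bookkeeping in contracting (\ref{kk3}) with $q^\bt_\tau$ — specifically, verifying that the two derivative terms collapse to a single copy of $P$ and that the acceleration pieces reassemble into $d_\tau G$; everything else is forced by $G>0$ and the antisymmetry of $F$.
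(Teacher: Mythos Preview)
Your proof is correct and follows essentially the same line as the paper's: both arguments rest on the observation that a solution lying in $W_G$ satisfies $d_\tau G=0$, after which the Lagrange equation (\ref{kk2}) collapses to $E_\la=0$. The paper reaches this conclusion by substituting $G=1$ and $d_\tau G=0$ directly into the first line of (\ref{kk2}), whereas you work through the factorised second line and the contraction identity $q^\bt_\tau E_\bt=-\frac{2N-1}{2N}\,d_\tau G$ --- an identity the paper itself records in the proof of Lemma \ref{kk12'}.
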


\begin{proof}
A solution of the Lagrange equation (\ref{kk2}) living in the
submanifold $W_G$ obeys the system of equations
\mar{kk13}\beq
\cE_\la=0, \qquad G=1. \label{kk13}
\eeq
Therefore, it satisfies the equality
\mar{kk16}\beq
d_\tau G=0. \label{kk16}
\eeq
Then a glance at the expression (\ref{kk2}) shows that the
equations (\ref{kk13}) are equivalent to the equations
\mar{kk14}\ben
&& E_\la= \left(\frac{\dr_\la
G_{\m\al_2\ldots\al_{2N}}}{2N}- \dr_\m
G_{\la\al_2\ldots\al_{2N}}\right) q^\m_\tau q^{\al_2}_\tau\cdots
q^{\al_{2N}}_\tau - \nonumber\\
&& \qquad (2N-1)G_{\bt\m\al_3\ldots\al_{2N}}q^\m_{\tau\tau}q^{\al_3}_\tau\cdots
q^{\al_{2N}}_\tau  + F_{\bt\m}q^\m_\tau =0, \label{kk14}\\
&& G=G_{\al_1\ldots\al_{2N}}q^{\al_1}_\tau\cdots q^{\al_{2N}}_\tau=1.
\nonumber
\een
\end{proof}

\begin{lemma} \label{kk12'} \mar{kk12'}
Solutions of the equation (\ref{kk4}) do not leave the submanifold
$W_G$ (\ref{kk8}).
\end{lemma}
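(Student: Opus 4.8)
The plan is to show that the scalar $G$ (\ref{kk1}) is a first integral of the equation (\ref{kk4}); concretely, I would establish the identity
\be
q^\bt_\tau E_\bt &=& -\frac{2N-1}{2N}\, d_\tau G.
\ee
Granting this, the lemma is immediate: along any solution of (\ref{kk4}) one has $E_\bt=0$, hence $d_\tau G=0$, so $G$ is constant. A solution passing through a point of the level surface $W_G=\{G=1\}$ therefore satisfies $G\equiv 1$ and never leaves $W_G$.

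To prove the identity I would contract the expression (\ref{kk3}) for $E_\bt$ with $q^\bt_\tau$ and treat the three groups of terms in turn. First, in the two terms carrying $\dr_\bt G_{\m\al_2\ldots\al_{2N}}$ and $\dr_\m G_{\bt\al_2\ldots\al_{2N}}$, contracting all $2N$ indices of the tensor with velocities converts each into a coordinate derivative of the scalar, e.g. $\dr_\bt G_{\m\al_2\ldots\al_{2N}}q^\m_\tau q^{\al_2}_\tau\cdots q^{\al_{2N}}_\tau=\dr_\bt G$; after relabelling the summed index these two contributions combine to $-\frac{2N-1}{2N}q^\m_\tau\dr_\m G$. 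Second, for the acceleration term I would use the symmetry of $G_{\al_1\ldots\al_{2N}}$ together with the relation $G_{\bt\m\al_3\ldots\al_{2N}}q^\bt_\tau q^{\al_3}_\tau\cdots q^{\al_{2N}}_\tau=\frac{1}{2N}\dr_\m^\tau G$, obtained by differentiating $G$ (\ref{kk1}) in the velocity $q^\m_\tau$; this turns the term into $-\frac{2N-1}{2N}q^\m_{\tau\tau}\dr_\m^\tau G$. Finally, the term $G^{1-1/2N}F_{\bt\m}q^\bt_\tau q^\m_\tau$ vanishes because $F_{\bt\m}$ is antisymmetric while $q^\bt_\tau q^\m_\tau$ is symmetric. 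Since $G$ has no explicit $\tau$-dependence, $d_\tau G=q^\m_\tau\dr_\m G+q^\m_{\tau\tau}\dr_\m^\tau G$, and summing the three pieces yields the identity above.

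The calculation is essentially bookkeeping, and the one place that demands care is the combinatorial factor $1/2N$ arising whenever $2N-1$ of the symmetric indices of $G_{\al_1\ldots\al_{2N}}$ are contracted with velocities: it is exactly this factor that produces the common coefficient $\frac{2N-1}{2N}$ in every surviving term and makes the sum collapse to a multiple of $d_\tau G$ rather than to an uncontrolled remainder. Because $\frac{2N-1}{2N}\neq 0$, imposing (\ref{kk4}) forces $d_\tau G=0$, and the invariance of $W_G$ follows. It is worth noting that the Noether identity (\ref{s60}) does not by itself deliver this conclusion, since $q^\la_\tau\cE_\la=0$ holds identically through the projector in (\ref{kk2}); the conservation of $G$ genuinely requires contracting the reduced quantity $E_\bt$ with $q^\bt_\tau$.
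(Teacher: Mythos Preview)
Your proposal is correct and follows exactly the paper's approach: the paper's proof consists of stating the identity $d_\tau G=-\frac{2N}{2N-1}q^\bt_\tau E_\bt$ (equivalently your $q^\bt_\tau E_\bt=-\frac{2N-1}{2N}d_\tau G$) and concluding that solutions of (\ref{kk4}) through $W_G$ satisfy $d_\tau G=0$. You supply the termwise derivation of this identity that the paper omits, but the argument is the same.
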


\begin{proof}
Since
\be
d_\tau G= -\frac{2N}{2N-1}q^\bt_\tau E_\bt,
\ee
any solution of the equation (\ref{kk4}) intersecting the
submanifold $W_G$ (\ref{kk8}) obeys the equality (\ref{kk16}) and,
consequently, lives in $W_G$.
\end{proof}

The system of equations (\ref{kk14}) is called the relativistic
equation. Its components $E_\la$ (\ref{kk3}) are not independent,
but obeys the relation
\be
q^\bt_\tau E_\bt=- \frac{2N-1}{2N}d_\tau G=0, \qquad G=1,
\ee
similar to the Noether identity (\ref{s60}). The condition
(\ref{kk8}) is called the relativistic constraint.

Though the equation (\ref{kk2}) for sections of a fibre bundle
$Q_R\to\Bbb R$ is underdetermined, it is determined if, given a
coordinate chart $(U;q^0,q^i)$ (\ref{0303}) of $Q$ and the
corresponding coordinate chart (\ref{s20'}) of $Q_R$, we rewrite
it in the terms of three-velocities $q^i_0$ (\ref{s31}) as an
equation for sections of a fibre bundle $U\to\pi(U)$ (\ref{0301}).

Let us denote
\mar{kk60}\beq
\ol G(q^\la,q^i_0)=(q^0_\tau)^{-2N}G(q^\la,q^\la_\tau), \qquad
q^0_\tau\neq 0. \label{kk60}
\eeq
Then we have
\be
\cE_i=q^0_\tau\left[\frac{\dr_i \ol G}{2N\ol G^{1-1/2N}} -
(q^0_\tau)^{-1}d_\tau\left(\frac{\dr_i^0 \ol G}{2N\ol
G^{1-1/2N}}\right)+ F_{ij}q^j_0 + F_{i0}\right].
\ee
Let us consider a solution $\{s^\la(\tau)\}$ of the equation
(\ref{kk2}) such that $\dr_\tau s^0$ does not vanish and there
exists an inverse function $\tau(q^0)$. Then this solution can be
represented by sections
\mar{kk51}\beq
s^i(\tau)=(\ol s^i\circ s^0)(\tau) \label{kk51}
\eeq
of the composite bundle
\be
\Bbb R\times U\to \Bbb R\times \p(U)\to \Bbb R
\ee
where $\ol s^i(q^0)=s^i(\tau(q^0))$ are sections of $U\to \pi(U)$
and $s^0(\tau)$ are sections of $\Bbb R\times \p(U)\to \Bbb R$.
Restricted to such solutions, the equation (\ref{kk2}) is
equivalent to the equation
\mar{kk20}\ben
&& \ol\cE_i=\frac{\dr_i \ol G}{2N\ol G^{1-1/2N}} -
d_0\left(\frac{\dr_i^0 \ol G}{2N\ol G^{1-1/2N}}\right)+
\label{kk20} \\
&& \qquad F_{ij}q^j_0 + F_{i0}=0, \nonumber\\
&& \ol\cE_0=-q^i_0\ol\cE_i. \nonumber
\een
for sections $\ol s^i(q^0)$ of a fibre bundle $U\to\pi(U)$.

It is readily observed that the equation (\ref{kk20}) is the
Lagrange equation of the Lagrangian
\mar{kk21}\beq
\ol L=(\ol G^{1/2N} + q^i_0 A_i + A_0)dq^0 \label{kk21}
\eeq
on the jet manifold $J^1U$ of a fibre bundle $U\to\pi(U)$.

\begin{remark} \label{kk50} \mar{kk50}
Both the equation (\ref{kk20}) and the Lagrangian (\ref{kk21}) are
defined only on a coordinate chart (\ref{0303}) of $Q$ since they
are not maintained by transition functions (\ref{s120}) --
(\ref{s120'}).
\end{remark}

A solution $\ol s^i(q^0)$ of the equation (\ref{kk20}) defines a
solution $s^\la(\tau)$ (\ref{kk51}) of the equation (\ref{kk2}) up
to an arbitrary function $s^0(\tau)$. The relativistic constraint
(\ref{kk8}) enables one to overcome this ambiguity as follows.

Let us assume that, restricted to the coordinate chart
$(U;q^0,q^i)$ (\ref{0303}) of $Q$, the relativistic constraint
(\ref{kk8}) has no solution $q^0_\tau=0$. Then it is brought into
the form
\mar{kk62}\beq
(q^0_\tau)^{2N}\ol G(q^\la,q^i_0)=1, \label{kk62}
\eeq
where $\ol G$ is the function (\ref{kk60}). With the condition
(\ref{kk62}), every three-velocity $(q^i_0)$ defines a unique pair
of four-velocities
\mar{kk63}\beq
q^0_\tau = \pm (\ol G(q^\la,q^i_0))^{1/2N}, \qquad
q^i_\tau=q_\tau^0q^i_0. \label{kk63}
\eeq
Accordingly, any solution $\ol s^i(q^0)$ of the equation
(\ref{kk20}) leads to solutions
\be
\tau(q^0)=\pm\int (\ol G(q^0,\ol s^i(q^0),\dr_0\ol
s^i(q_0))^{-1/2N}dq^0, \quad s^i(\tau)=s^0(\tau)(\dr_i\ol
s^i)(s^0(\tau))
\ee
of the equation (\ref{kk13}) and, equivalently, the relativistic
equation (\ref{kk14}).

\begin{example} \label{0313} \mar{0313}
Let $Q=M^4$ be a Minkowski space provided with the Minkowski
metric $\eta_{\m\nu}$ of signature $(+,---)$. This is the case of
Special Relativity. Let $\cA_\la dq^\la$ be a one-form on $Q$.
Then
\mar{s133}\beq
L=[m(\eta_{\m\nu}q^\m_\tau q^\nu_\tau)^{1/2} +e \cA_\m
q^\m_\tau]d\tau, \qquad m,e\in\Bbb R,\label{s133}
\eeq
is a relativistic Lagrangian on $J^1Q_R$ which satisfies the
Noether identity (\ref{s60}). The corresponding relativistic
equation (\ref{kk14}) reads
\mar{x4,'}\ben
&& m\eta_{\mu\nu}q^\nu_{\tau\tau} -eF_{\m\nu}q^\nu_\tau=0,
\label{x4}\\
&& \eta_{\m\nu}q^\m_\tau q^\nu_\tau=1.  \label{x4'}
\een
This describes a relativistic massive charge in the presence of an
electromagnetic field $\cA$. It follows from the relativistic
constraint (\ref{x4'}) that $(q^0_\tau)^2\geq 1$. Therefore,
passing to three-velocities, we obtain the Lagrangian
(\ref{kk21}):
\be
\ol L=\left[m(1-\op\sum_i (q^i_0)^2)^{1/2} +e (\cA_i q^i_0
+\cA_0)\right]dq^0,
\ee
and the Lagrange equation (\ref{kk20}):
\be
d_0\left(\frac{mq^i_0}{(1-\op\sum_i (q^i_0)^2)^{1/2}}\right)
+e(F_{ij}q^j_0 + F_{i0})=0.
\ee
\end{example}

\begin{example} \label{kk65} \mar{kk65}
Let $Q=\Bbb R^4$ be an Euclidean space provided with the Euclidean
metric $\e$. This is the case of Euclidean Special Relativity. Let
$\cA_\la dq^\la$ be a one-form on $Q$. Then
\be
L=[(\e_{\m\nu}q^\m_\tau q^\nu_\tau)^{1/2} +\cA_\m q^\m_\tau]d\tau
\ee
is a relativistic Lagrangian on $J^1Q_R$ which satisfies the
Noether identity (\ref{s60}). The corresponding relativistic
equation (\ref{kk14}) reads
\mar{kk66,'}\ben
&& m \e_{\mu\nu}q^\nu_{\tau\tau} -eF_{\m\nu}q^\nu_\tau=0,
\label{kk66}\\
&& \e_{\m\nu}q^\m_\tau q^\nu_\tau=1.  \label{kk66'}
\een
It follows from the relativistic constraint (\ref{kk66'}) that
$0\leq (q^0_\tau)^2\leq 1$. Passing to three-velocities, one
therefore meets a problem.
\end{example}

\section{Relativistic geodesic equations}

A glance at the relativistic Lagrangian (\ref{kk1}) shows that,
because of the gauge symmetry (\ref{kk33}), this Lagrangian is
independent of $\tau$ and, therefore, it describes an autonomous
mechanical system. Accordingly, the relativistic equation
(\ref{kk14}) on $Q_R$ is conservative and, therefore, it is
equivalent to an autonomous second order equation on $Q$ whose
solutions are parameterized by the coordinate $\tau$ on a base
$\Bbb R$ of $Q_R$. Given holonomic coordinates $(q^\la,\dot q^\la,
\ddot q^\la)$ of the second tangent bundle $T^2Q$, this autonomous
second order equation (called the autonomous relativistic
equation) reads
\mar{k34}\ben
&&\left(\frac{\dr_\la G_{\m\al_2\ldots\al_{2N}}}{2N}- \dr_\m
G_{\la\al_2\ldots\al_{2N}}\right) \dot q^\m \dot q^{\al_2}\cdots
\dot q^{\al_{2N}} - \nonumber\\
&& \qquad (2N-1)G_{\bt\m\al_3\ldots\al_{2N}}\ddot q^\m\dot q^{\al_3}\cdots
\dot q^{\al_{2N}}  + F_{\bt\m}\dot q^\m =0, \label{k34}\\
&& G=G_{\al_1\ldots\al_{2N}}\dot q^{\al_1}\cdots \dot q^{\al_{2N}}=1.
\nonumber
\een
Due to the canonical isomorphism $q^\la_\tau\to \dot q^\la$
(\ref{s30}), the tangent bundle $TQ$ is regarded as a space of
four-velocities.

Generalizing Example \ref{0313}, let us investigate relativistic
mechanics on a four-dimensional pseudo-Riemannian manifold $Q=X$,
coordinated by $(x^\la)$ and provided with a pseudo-Riemannian
metric $g$ of signature $(+,---)$. We agree to call $X$ a world
manifold. Let $A=A_\la dx^\la$ be a one-form on $X$. Let us
consider the relativistic Lagrangian (\ref{kk1}):
\be
L=[(g_{\al\bt}x^\al_\tau x^\bt_\tau)^{1/2} + A_\m x^\m_\tau]d\tau,
\ee
and the relativistic constraint (\ref{kk8}):
\be
g_{\al\bt}x^\al_\tau x^\bt_\tau=1.
\ee
The corresponding autonomous relativistic equation (\ref{kk14}) on
$X$ takes the form
\mar{kk72,1}\ben
&& \ddot x^\la -\{_\m{}^\la{}_\nu\}\dot x^\m\dot x^\nu
-g^{\la\bt}F_{\bt\nu}\dot x^\nu=0, \label{kk72}\\
&& g=g_{\al\bt}\dot x^\al \dot x^\bt=1, \label{kk71}
\een
where $\{_\m{}^\la{}_\nu\}$ is the Levi--Civita connection. A
glance at the equality (\ref{kk72}) shows that it is a geodesic
equation on $TX$ with respect to an affine connection
\mar{kk75}\beq
K_\m^\la= \{_\m{}^\la{}_\nu\} \dot x^\nu + g^{\la\nu}F_{\nu\m}.
\label{kk75}
\eeq
on $TX$.

A particular form of this connection follows from the fact that
the geodesic equation (\ref{kk72}) is derived from a Lagrange
equation, i.e., we are in the case of Lagrangian relativistic
mechanics. In a general setting, relativistic mechanics on a
pseudo-Riemannian manifold $(X,g)$ can be formulated as follows.

The geodesic equation
\mar{cqg2}\beq
\ddot x^\m= K_\la^\m(x^\nu,\dot x^\nu) \dot x^\la, \label{cqg2}
\eeq
on the tangent bundle $TX$ with respect to a connection
\mar{cqg3}\beq
K=dx^\la\ot(\dr_\la +K^\m_\la\dot\dr_\m) \label{cqg3}
\eeq
on $TX\to X$ is called a relativistic geodesic equation if a
geodesic vector field of $K$ lives in the subbundle of
hyperboloids
\mar{cqg1}\beq
W_g=\{\dot x^\la\in TX\, \mid \,\,g_{\la\m} \dot x^\la\dot
x^\m=1\}\subset TX \label{cqg1}
\eeq
defined by the relativistic constraint (\ref{kk71}).

One can show that the equation (\ref{cqg2}) is a relativistic
geodesic equation if the condition
\mar{cqg4}\beq
(\dr_\la g_{\m\nu}\dot x^\m + 2g_{\m\nu}K^\m_\la)\dot x^\la \dot
x^\nu =0 \label{cqg4}
\eeq
holds.

Obviously, the connection (\ref{kk75}) fulfils the condition
(\ref{cqg4}). Any metric connection, e.g., the Levi--Civita
connection $\{_\la{}^\m{}_\nu\}$ on $TX$ satisfies the condition
(\ref{cqg4}).

Given a Levi--Civita connection $\{_\la{}^\m{}_\nu\}$, any
connection $K$ on $TX\to X$ can be written as
\mar{kk80}\beq
K^\m_\la = \{_\la{}^\m{}_\nu\}\dot x^\nu + \si^\m_\la(x^\la,\dot
x^\la), \label{kk80}
\eeq
where
\mar{x2}\beq
\si=\si^\m_\la dx^\la\ot\dot\dr_\la \label{x2}
\eeq
is some soldering form on $TX$. Then the condition (\ref{cqg4})
takes the form
\mar{cqg46}\beq
g_{\m\nu}\si^\m_\la\dot x^\la \dot x^\nu=0. \label{cqg46}
\eeq

With the decomposition (\ref{kk80}), one can think of the
relativistic geodesic equation (\ref{cqg2}):
\mar{kk81}\beq
\ddot x^\m= \{_\la{}^\m{}_\nu\}\dot x^\nu\dot x^\la +
\si^\m_\la(x^\la,\dot x^\la) \dot x^\la, \label{kk81}
\eeq
as describing a relativistic particle in the presence of a
gravitational field $g$ and a non-gravitational external force
$\si$.

\section{Hamiltonian relativistic mechanics}

We are in the case of relativistic mechanics on a
pseudo-Riemmanian world manifold $(X,g)$. Given the coordinate
chart (\ref{0301}) of its configuration space $X$, the homogeneous
Legendre bundle corresponding to the local non-relativistic system
on $U$ is the cotangent bundle $T^*U$ of $U$. This fact motivate
us to think of the cotangent bundle $T^*X$ as being the phase
space of relativistic mechanics on $X$. It is provided with the
canonical symplectic form
\mar{q25}\beq
\Om=dp_\la\w dx^\la \label{q25}
\eeq
and the corresponding Poisson bracket $\{,\}$.

A relativistic Hamiltonian is defined as follows
\cite{book98,rov91,sard98}. Let $H$ be a smooth real function on
$T^*X$ such that the morphism
\mar{gm616}\beq
\wt H: T^*X\to TX, \qquad \dot x^\m\circ \wt H=\dr^\m H,
\label{gm616}
\eeq
is a bundle isomorphism. Then the inverse image
\be
N=\wt H^{-1}(W_g)
\ee
of the subbundle of hyperboloids $W_g$ (\ref{cqg1}) is a
one-codimensional (consequently, coisotropic) closed imbedded
subbundle $N$ of $T^*X$ given by the condition
\mar{qq90'}\beq
H_T=g_{\m\nu}\dr^\m H\dr^\nu H-1=0. \label{qq90'}
\eeq
We say that $H$ is a relativistic Hamiltonian if the Poisson
bracket $\{H,H_T\}$ vanishes on $N$. This means that the
Hamiltonian vector field
\mar{rq11}\beq
\g=\dr^\la H\dr_\la -\dr_\la H\dr^\la \label{rq11}
\eeq
of $H$ preserves the constraint $N$ and, restricted to $N$, it
obeys the equation
\mar{gm610}\beq
\g\rfloor \Om_N +i_N^* d H=0, \label{gm610}
\eeq
which is the Hamilton equation of a Dirac constrained system on
$N$ with a Hamiltonian $H$ \cite{book05}.

The morphism (\ref{gm616}) sends the vector field $\g$
(\ref{rq11}) onto the vector field
\be
\g_T=\dot x^\la\dr_\la + (\dr^\m H\dr^\la\dr_\m H-\dr_\m
H\dr^\la\dr^\m H)\dot\dr_\la
\ee
on $TX$. This vector field defines the autonomous second order
dynamic equation
\mar{q35}\beq
\ddot x^\la=\dr^\m H\dr^\la\dr_\m H-\dr_\m H\dr^\la\dr^\m H
\label{q35}
\eeq
on $X$ which preserves the subbundle of hyperboloids (\ref{cqg1}),
i.e., it is the autonomous relativistic equation (\ref{k34}).

\begin{example}  \label{w770} \mar{w770}
The following is a basic example of relativistic Hamiltonian
mechanics. Given a one-form $A=A_\m dq^\m$ on $X$, let us put
\mar{kk101}\beq
H= g^{\m\n}(p_\m-A_\m)( p_\nu-A_\nu). \label{kk101}
\eeq
Then $H_T=2H-1$ and, hence, $\{H,H_T\}=0$. The constraint $H_T=0$
(\ref{qq90'}) defines a one-codimensional closed imbedded
subbundle $N$ of $T^*X$. The Hamilton equation (\ref{gm610}) takes
the form $\g\rfloor\Om_N=0$. Its solution (\ref{rq11}) reads
\be
&& \dot x^\al=g^{\al\nu}(p_\nu -A_\nu), \\
&& \dot p_\al=-\frac12\dr_\al g^{\m\nu}(p_\m-A_\m)( p_\nu-A_\nu)
+ g^{\m\nu}(p_\m-A_\m)\dr_\al A_\nu.
\ee
The corresponding autonomous second order dynamic equation
(\ref{q35}) on $X$ is
\mar{q12}\ben
&& \ddot x^\la- \{_\m{}^\la{}_\nu\}\dot x^\m\dot x^\nu -
g^{\la\nu}F_{\nu\m} \dot x^\mu=0, \label{q12}\\
&& \{_\m{}^\la{}_\nu\}=-\frac12g^{\la\bt}(\dr_\m
g_{\bt\nu}+\dr_\nu g_{\bt\m} -\dr_\bt g_{\m\nu}),
\nonumber\\
&& F_{\m\nu}=\dr_\m A_\nu-\dr_\nu A_\m. \nonumber
\een
It is a relativistic geodesic equation with respect to the affine
connection (\ref{kk75}).
\end{example}

Since the equation (\ref{q12}) coincides with the generic Lagrange
equation (\ref{kk72}) on a world manifold $X$, one can think of
$H$ (\ref{kk101}) as being a generic Hamiltonian of relativistic
mechanics on $X$.

\end{document}